\newtheorem{theorem}{Theorem}[section]
\newtheorem{lemma}[theorem]{Lemma}
\theoremstyle{definition}
\newtheorem{definition}[theorem]{Definition}
\newtheorem{assumption}[theorem]{Assumption}
\theoremstyle{remark}
\newcommand{\vertiii}[1]{{\left\vert\kern-0.25ex\left\vert\kern-0.25ex\left\vert #1 
    \right\vert\kern-0.25ex\right\vert\kern-0.25ex\right\vert}}
\newcommand{\EE}{\mathbb{E}}
\newcommand{\Ss}{\mathcal{S}}
\newcommand{\Ls}{\mathcal{L}}
\newcommand{\Ns}{\mathcal{N}}
\newcommand{\Ps}{\mathcal{P}}
\newcommand{\Os}{\mathcal{O}}
\newcommand{\RR}{\mathbb{R}}
\newcommand{\uJ}{\ddot{J}}
\newcommand{\uK}{\ddot{K}}
\newcommand{\uS}{\ddot{S}}
\newcommand{\uq}{\ddot{q}}
\newcommand{\uP}{\ddot{P}}
\newcommand{\uSigma}{\ddot{\Sigma}}
\newcommand{\utau}{\ddot{\tau}}
\newcommand{\ugamma}{\ddot{\gamma}}
\newcommand{\tP}{\tilde{P}}
\newcommand{\tq}{\tilde{q}}
\newcommand{\tr}{Tr}
\definecolor{NavyBlue}{rgb}{0.0, 0.0, 0.5}
\definecolor{ceruleanblue}{rgb}{0.16, 0.32, 0.75}
\definecolor{green1}{rgb}{0.2,0.7,0.2}
\definecolor{CadetBlue}{rgb}{0.37, 0.62, 0.63}
\begin{document}
\title{Policy Optimization finds Nash Equilibrium in Regularized\\ General-Sum LQ Games
}

\author{Muhammad~Aneeq~uz~Zaman, Shubham~Aggarwal, Melih~Bastopcu, and Tamer~Ba{\c s}ar
\thanks{Research of the authors was supported in part by the US Army Research Office (ARO) Grant W911NF-24-1-0085.
}
%
\thanks{
The authors are with the Coordinated Science Laboratory at the University of Illinois Urbana-Champaign, Urbana, IL, USA-61801.
(Emails: \texttt{\{mazaman2,sa57,bastopcu,basar1\}@illinois.edu})
}
}


\maketitle

\begin{abstract}
In this paper, we investigate the impact of introducing relative entropy regularization on the Nash Equilibria (NE) of General-Sum $N$-agent games, revealing the fact that the NE of such games conform to linear Gaussian policies. Moreover, it delineates sufficient conditions, contingent upon the adequacy of entropy regularization, for the uniqueness of the NE within the game. As Policy Optimization serves as a foundational approach for Reinforcement Learning (RL) techniques aimed at finding the NE, in this work we prove the linear convergence of a policy optimization algorithm which (subject to the adequacy of entropy regularization) is capable of provably attaining the NE. Furthermore, in scenarios where the entropy regularization proves insufficient, we present a $\delta$-augmentation technique, which facilitates the achievement of an $\epsilon$-NE within the game.
\end{abstract}
\section{Introduction}
Recent years have witnessed a significant increase in the number of empirical works on reinforcement learning (RL) focusing on multi-agent competitive environments, in applications such as 
autonomous driving \cite{shalev2016safe}, multi-armed bandits \cite{de2018comparing}, healthcare \cite{yu2021reinforcement}, and finance \cite{charpentier2021reinforcement} among others. In many of these real-world problems \cite{yu2021reinforcement,charpentier2021reinforcement}, guaranteed convergence and sample efficiency are critical due to the high cost of incorrect decisions. Hence, considerable theoretical attention has been directed towards \emph{specific} instances of games, such as (1) purely cooperative games \cite{min2023cooperative}, (2) zero-sum games \cite{jin2021v}, (3) potential games \cite{song2021can,cheng2024provable}, or (4) (coarse) correlated equilibria in general-sum games \cite{zhang2022optimal,mao2023provably}. However, learning the Nash equilibrium (NE) for general-sum games is considered to be a significantly harder problem \cite{jin2021v}. Recently, a burgeoning area of research has emerged \cite{zhang2021multi}, focusing on the development of provable Nash finding using RL techniques tailored for large population games through the mean-field game (MFG) paradigm \cite{cui2021approximately,angiuli2022unified,zaman2023oracle}, but they only provide approximate solutions for finite agent scenarios. 

The quest for provable RL methods applicable to general-sum games has remained a formidable challenge. Notably, investigations have unveiled discouraging findings, demonstrating the inadequacy of policy gradient methods to converge to Nash equilibria even for the benchmark Linear-Quadratic (LQ) setting \cite{mazumdar2019policy}. Building upon this, the authors in \cite{hambly2023policy} have posited the significance of noise as a potential catalyst for enabling the analysis of policy gradient methods. Yet, the System Noise (Assumption 4 \cite{hambly2023policy}) condition articulated in their work finds a more generalized counterpart in the form of a diagonal dominance condition \cite{zaman2024independent}, which may not be satisfied solely in the presence of adequate noise in the system.

In this work we investigate Policy Optimization (PO) methods to find NE of General-Sum LQ games with \emph{relative entropy regularized} costs, a framework previously explored by \cite{cui2021approximately} for infinite agent games. PO methods serve as fundamental components, laying the groundwork for the development of data-driven RL methodologies such as policy gradient \cite{fazel2018global} and stochastic optimization \cite{mei2021understanding}. Our investigation reveals that the inclusion of entropy regularization within the cost function confines NE policies to an exclusively linear Gaussian domain. Moreover, introducing Gaussian noise into the policies facilitates exploration across the policy space, thereby contributing to the linear convergence of PO methods, particularly when the entropy regularization is sufficiently large.  Subsequently, we draw upon pertinent literature within the domain to contextualize our findings.

\textbf{Literature Review}
Policy optimization algorithms for the LQ regulator case were first proposed in \cite{fazel2018global} with guarantees on their global convergence. For the multi-agent game setting, policy gradient methods for zero-sum games have been considered in \cite{bu2019global,zhang2019policy}. Specifically, the paper \cite{bu2019global} proposes projection-free sequential algorithms with global sublinear convergence to the NE, and the paper \cite{zhang2019policy} proposes nested-gradient based algorithms with globally sublinear and locally linear convergence rates. As for general sum game settings, \cite{roudneshin2020reinforcement} establishes the uniqueness of NE and consequent global convergence of the policy gradient method for symmetric games (where the agents have the same dynamics and cost function) with an infinite horizon setting. Finally, the paper \cite{hambly2023policy} proposes a natural policy gradient method guaranteeing global linear convergence for finite horizon general-sum games, under a System Noise condition. For this setting \cite{zaman2024independent} proves linear convergence of stochastic gradient methods under a diagonal dominance condition, which is shown to be more general than the System Noise condition. Concurrent work of \cite{lidard2024blending} provides a similar NE policy characterization without theoretical guarantees on the uniqueness \& computation of the same.

To aid in the convergence of aforementioned algorithms and to explore the uncharted portions of the state space, entropy regularization has been considered as a widely sought-after approach \cite{wang2018exploration,wang2020continuous,guo2023fast}, in the single-agent setting. In particular, the authors in \cite{wang2018exploration} study the effect of using a differential-entropy based regularization in the objective function of a LQ regulator 
problem. It is observed that the optimal feedback policy has a Gaussian distribution and that the exploitation and exploration are (mutually-exclusively) captured by its mean and variance. The same framework is then extended to the study of the mean-variance portfolio selection problem in a finite horizon setting \cite{wang2020continuous}, which is essentially a LQ-type problem. The authors in \cite{guo2023fast} propose policy learning algorithms for a class of discounted LQ problems in an infinite horizon setting, under an additive Shannon entropy-type regularization in the objective function. Finally, it is also worth mentioning that the idea of entropy regularization has also been studied for stability of RL algorithms for MFGs \cite{guo2022entropy,firoozi2022exploratory}, where the framework essentially utilizes an (auxiliary) infinite agent game to entail a tractable characterization of approximate equilibrium solutions for the $N$--agent game problem \cite{lasry2007mean,huang2006large,aggarwal2022linear}.

In contrast to the previous works which focus on a single agent LQ regulator problem or on MFGs, the current work addresses the convergence of policy optimization in \emph{general-sum LQ games}, which is a considerably more challenging problem. We summarize the main contributions of our work as follows.

\textbf{Contributions: } 
\begin{enumerate}
    \item We formalize the Relative Entropy-Regularized General-Sum (ERGS) $N$-agent games where the dynamics of the state of the system is linear and the costs of the agents are quadratic.
    \item We characterize the NE of the ERGS game and find them to exclusively belong to the class of Linear Gaussian policies defined by a sequence of Coupled Riccati equations (Theorem \ref{thm:NE}). Furthermore, we find that if the entropy regularization parameter $\tau$ is lower bounded by a model dependent scalar, then the uniqueness of the NE is guaranteed.
    \item We propose a PO algorithm (Algorithm \ref{alg:RL_ER_Games}) which utilizes a receding-horizon approach to find the NE of the ERGS game. Under a suitable condition on $\tau$, we prove that the PO algorithm converges to the NE of the ERGS (Theorem \ref{thm:main_res}). Furthermore, in the case where $\tau$ does not satisfy the aforementioned condition, we provide a $\delta$-augmentation technique which yields an $\epsilon$-NE for the ERGS game.
\end{enumerate}

\textbf{Notation: } We denote the trace of a matrix as $\tr(\cdot)$, its determinant as $|\cdot|$, and its Frobenius norm as $\|\cdot\|_F$. A Gaussian distribution with mean $\mu$ and covariance $\Sigma$ is denoted by $\Ns(\mu,\Sigma)$. A policy $\mu$ is standard normal if it is distributed as $\Ns(0,I)$. The notation $[M]$ signifies the set $\{1,\ldots,M\}$ for all $M \in \mathbb{N}$ and $I$ denotes an indentity matrix of appropriate dimensions. 
Further, we use the notation $Z \succeq 0$ ($Z \succ 0$) for a symmetric positive semidefinite (definite) matrix $Z$. The notation $\mathcal{P}(\mathcal{X})$ denotes the set of probability distributions over $\mathcal{X}$. Finally, with a possible abuse of notation, we use the symbol log for natural logarithm.


\section{Problem Formulation}
Consider a general-sum $N$--agent game, where the joint state dynamics $x_t \in \RR^m$ is affected by the control actions $u^i_t \in \RR^p$ of each agent $i \in [N]$, and follows the stochastic difference equation:
\begin{align}
\begin{split}
x_{t+1} = A_t x_t + \sum_{i = 1}^N B^i_t u^i_t + \omega_t, \quad 0 \leq t \leq T-1, \label{eq:dyn}
\end{split}
\end{align}
where $x_0 \sim \Ns(x^0,\Sigma^0)$, $\Sigma^0$ is a positive semi-definite symmetric matrix, and $A_t,$ $B^i_t$ are matrices of suitable dimensions. Further, the noise process is $\omega_t \sim \Ns(0,\Sigma)$ (where $\Sigma \succeq 0$), which and is independent of the state $x_t$ for all $t$ as well as the initial state distribution.

Each agent $i \in [N]$ aims to minimize its finite horizon cost function given by: 
\begin{align}\nonumber\\[-2.2em]
     J^i (x,\pi^i,\pi^{-i}) =& \EE \bigg[ \sum_{t=0}^{T-1} x^\top_t Q^i_t x_t + (u^i_t)^\top R^i_t u^i_t  \label{eq:ent_reg_cost} \\
    & \hspace{-0.1cm} + \tau \log \bigg( \frac{\pi^i_t(u^i_t | x_t)}{\mu^i_t(u^i_t | x_t)} \bigg) + x^\top_T Q^i_T x_T \bigg| x_0 = x \bigg], \nonumber
\end{align}
 which depends on its own policy $\pi^{i}$ as well as the policies of other agents $\pi^{-i}$ (through the state). 
 The matrices $Q^i_t \succeq 0$ and $R^i_t \succ 0$ are symmetric matrices and the logarithmic term is the relative entropy-based regularizer (as in \cite{cui2021approximately}), which equivalently denotes the KL divergence between a policy $\pi^i$ and a prior policy $\mu^i$. The parameter $\tau > 0$ weights the amount of entropy added in the cost function. The policies $(\pi^i)_{i \in [N]}$ lie in the space of admissible policies $\Pi^i:= \{\pi^i \mid \pi^i_t \text{ is adapted to } \mathcal{I}^i_t:= \{x_0, u^i_0, x_1,\ldots,u^i_{t-1}, x_t\}\}$, where $\mathcal{I}^i_t$ denotes the information available to agent $i$ at time $t$.

For each agent $i \in [N]$ to minimize its cost \eqref{eq:ent_reg_cost} with respect to their policies $\pi^i$, the appropriate solution concept is the NE which is defined below.
\begin{definition} \label{def:NE}
    A NE of an $N$--agent game is a set of policies $(\pi^{1*},\ldots,\pi^{N*})$ such that for any $x \in \RR^m$,
    \begin{align*}
        J^i(x,\pi^{i*},\pi^{-i*}) \leq J^i(x,\pi^{i},\pi^{-i*})
    \end{align*}
    for all $i \in [N]$ and $\pi^i \in \Pi^i$.
\end{definition}
Next we characterize the NE of the ERGS game.
\section{Nash Equilibrium Characterization}
In this section we first characterize the NE of the ERGS game using the Hamilton-Jacobi-Isaacs 
 (HJI) equations. Then we provide sufficient conditions for uniqueness of the NE.

To aid us in the above characterization, we start by defining the value function of agent $i\in [N]$ at time $t$ as a partial cost-to-go from time $t$ to $T$:
\begin{align}
     J^i_t (x,\pi^i,\pi^{-i}) =& \EE \bigg[ \sum_{s=t}^{T-1} x^\top_{s} Q^i_{s} x_{s} + (u^i_{s})^\top R^i_{s} u^i_{s}  \label{eq:ent_reg_value} \\
    &  + \tau \log \bigg( \frac{\pi^i_s(u^i_{s} | x_{s})}{\mu^i_{s}(u^i_{s} | x_{s})} \bigg) + x^\top_T Q^i_T x_T \bigg| x_t \!=\! x \bigg]. \nonumber
\end{align}
Further, we define the Nash value of the same agent $i$ as the cost incurred under a NE policy $J^{i*}_t(x) = J^i_t(x,\pi^{i*},\pi^{-i*})$. Before characterizing the NE of the ERGS game we provide a helper lemma which might be of independent interest. 
    \begin{lemma} \label{lem:helper}
        Suppose that $M \in \RR^{p \times p}$ is a positive semi-definite symmetric matrix, $\tau > 0$, $b \in \RR^p$ and prior policy $\mu(u) = \Ns(0,I)$. Then, the probability distribution $\pi(u) \in \Ps(\RR^p)$ which minimizes the following expression,
        \begin{align}
 \EE_{u \sim \pi(\cdot)} \bigg[u^\top M u + b^\top u + \tau \log \bigg( \frac{\pi(u)}{\mu(u)} \bigg) \bigg], \label{eq:helper_cost}
        \end{align}
        is a multivariate Gaussian distribution, in particular,  $\pi(u) = \Ns(-((\nicefrac{\tau}{2})I + M)^{-1} b/2, (I + 2M/\tau)^{-1})$. Furthermore, the cost \eqref{eq:helper_cost} is strictly convex in $\pi$.
    \end{lemma}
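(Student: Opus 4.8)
The plan is to recognize this as an instance of the Gibbs (Donsker--Varadhan) variational principle, which allows optimizing over \emph{all} of $\Ps(\RR^p)$ at once, rather than positing a Gaussian ansatz and separately arguing that no non-Gaussian distribution does better. Writing $c(u) := u^\top M u + b^\top u$ for the quadratic part, the objective in \eqref{eq:helper_cost} is
\[
\EE_{u\sim\pi}[c(u)] + \tau\,\mathrm{KL}(\pi\|\mu),
\]
i.e. a term that is \emph{linear} in $\pi$ plus $\tau$ times the relative entropy against the prior $\mu$. The key observation is that the Gaussian form should emerge as a \emph{consequence} of this structure rather than be assumed.

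The main step is to introduce the candidate minimizer $\pi^*(u) := Z^{-1}\mu(u)\exp(-c(u)/\tau)$, where $Z := \EE_{u\sim\mu}[\exp(-c(u)/\tau)]$ is a normalizing constant, and to establish the completion-of-squares identity at the level of distributions: for any $\pi$ absolutely continuous with respect to $\mu$,
\[
\EE_{u\sim\pi}[c(u)] + \tau\,\mathrm{KL}(\pi\|\mu) = \tau\,\mathrm{KL}(\pi\|\pi^*) - \tau\log Z .
\]
This follows by expanding $\log(\pi/\pi^*) = \log(\pi/\mu) + \log(\mu/\pi^*)$ and substituting the definition of $\pi^*$. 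Since $\mathrm{KL}(\pi\|\pi^*)\ge 0$ with equality if and only if $\pi=\pi^*$, and $-\tau\log Z$ is independent of $\pi$, this simultaneously identifies $\pi^*$ as the unique minimizer and delivers strict convexity: the expectation term is linear in $\pi$ while $\tau\,\mathrm{KL}(\pi\|\mu)$ is strictly convex in its first argument.

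It then remains to compute $\pi^*$ explicitly. Substituting $\mu(u)\propto\exp(-\tfrac12 u^\top u)$ gives $\pi^*(u)\propto \exp\!\big(-\tfrac12 u^\top(I + 2M/\tau)u - \tfrac1\tau b^\top u\big)$. Because $M\succeq 0$ and $\tau>0$, we have $I + 2M/\tau \succeq I \succ 0$, so $Z$ is finite and $\pi^*$ is a genuine Gaussian density with precision $I+2M/\tau$, hence covariance $(I+2M/\tau)^{-1}$, matching the claim. Completing the square in the exponent identifies the mean as $-\tfrac1\tau(I+2M/\tau)^{-1}b$; using the rearrangement $(I+2M/\tau)^{-1} = \tfrac{\tau}{2}(\tfrac\tau2 I + M)^{-1}$ this reduces to $-\tfrac12(\tfrac\tau2 I + M)^{-1}b$, exactly the stated mean.

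The only point requiring care beyond routine matrix algebra is justifying optimality over the \emph{entire} space $\Ps(\RR^p)$: distributions singular with respect to $\mu$ render $\mathrm{KL}(\pi\|\mu)=+\infty$ and are trivially suboptimal, while for absolutely continuous $\pi$ the variational identity above applies verbatim. This is the hard part conceptually, but it is handled cleanly once the identity is in place, since it shows that restricting to Gaussians loses nothing and that the minimizer is Gaussian by necessity.
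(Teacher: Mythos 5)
Your proof is correct, and it takes a genuinely different route from the paper's. The paper proceeds by calculus of variations: it forms a Lagrangian for the normalization constraint, sets the pointwise derivative of the integrand to zero to obtain $\pi(u) \propto \mu(u)\exp(-(u^\top M u + b^\top u)/\tau)$, verifies convexity via a pointwise second-derivative check, and then completes the square. You instead invoke the Gibbs variational principle: defining $\pi^*(u) = Z^{-1}\mu(u)\exp(-c(u)/\tau)$ and establishing the identity
\begin{align*}
\EE_{u\sim\pi}[c(u)] + \tau\,\mathrm{KL}(\pi\|\mu) = \tau\,\mathrm{KL}(\pi\|\pi^*) - \tau\log Z,
\end{align*}
so that optimality and uniqueness of $\pi^*$ follow at once from nonnegativity of relative entropy, and strict convexity follows from linearity of the first term plus strict convexity of $\mathrm{KL}(\cdot\|\mu)$. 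What your approach buys is rigor over the \emph{whole} of $\Ps(\RR^p)$: the paper's stationarity argument implicitly assumes $\pi$ has an everywhere-positive density (it even remarks on this), whereas your identity disposes of singular measures ($\mathrm{KL}=+\infty$) and absolutely continuous ones uniformly, with no Lagrange-multiplier bookkeeping and no need to argue after the fact that the stationary point is a minimum. What the paper's approach buys is self-containedness — it does not rely on properties of relative entropy as known facts — and it produces the explicit normalizing constant along the way. Your closing matrix algebra, $(I+2M/\tau)^{-1} = (\tau/2)\big((\tau/2)I+M\big)^{-1}$ turning the mean $-\tau^{-1}(I+2M/\tau)^{-1}b$ into $-\tfrac{1}{2}\big((\tau/2)I+M\big)^{-1}b$, matches the stated parameters exactly.
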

    \begin{proof}
        Let us first reformulate the problem into a constrained optimization problem as:
        \begin{align}
             \min_{\pi(\cdot)}& \quad  \EE_{u \sim \pi(\cdot)} \bigg[u^\top M u + b^\top u + \tau \log \bigg( \frac{\pi(u)}{\mu(u)} \bigg) \bigg] \nonumber \\
             s.t. & \quad \int_{\RR^p} \pi(u) du = 1, \hspace{0.2cm} \pi(u) > 0. \label{eq:lag_const}
        \end{align}
        The constraint $\int_{\RR^p} \pi(u) du = 1$ is then incorporated using a Lagrange multiplier (non-negativity is shown to be satisfied implicitly) in the Lagrangian, defined by
        \begin{align*}
         \Ls(\pi,\lambda) =& \int_{\RR^p} \bigg( u^\top M u + b^\top u + \tau \log \bigg( \frac{\pi(u)}{\mu(u)} \bigg) \bigg) \pi(u) du \\
            &  + \lambda \bigg(  \int_{\RR^p} \pi(u) du - 1 \bigg),
            \end{align*}
which can then be equivalently written as:
            \begin{align*}
            \Ls(\pi,\lambda)& = \!\int_{\RR^p} \!\!\!\bigg( \!u^\top M u \!+\! b^\top u \!+\! \tau \log\! \bigg( \!\frac{\pi(u)}{\mu(u)} \bigg) \!+\! \lambda\!\bigg) \pi(u) du \!-\! \lambda \\
            & = \int_{\RR^p} f(u, \pi, \mu, \tau,\lambda) du - \lambda,
        \end{align*}
        where $f(u, v, w, \tau, \lambda) = (u^\top M u + b^\top u + \tau \log(v/w) + \lambda) v$. A necessary condition for optimality is $\partial f / \partial v = 0$, which leads to
        \begin{align}
            \frac{\partial f}{\partial v} = u^\top M u + b^\top u + \tau \log \bigg(\frac{v}{w} \bigg) + \tau + \lambda = 0. \label{eq:help_der}
        \end{align}
        Using \eqref{eq:help_der}, we have that
        \begin{align*}
            v = \frac{w}{e^{1 + \frac{\lambda}{\tau}}} \exp \bigg(-\frac{1}{\tau} (u^\top M u + b^\top u)\bigg). 
        \end{align*}
        Further, since $\frac{\partial^2}{\partial v^2} \big( u^\top M u + b^\top u + \tau \log \big( \frac{v}{w} \big) \big) v  = \frac{\tau}{v} > 0,$
        the cost \eqref{eq:helper_cost} is strictly convex with respect to $\pi$. An implicit assumption here is that $v$ here cannot be zero, which means that $\pi(u)$ assigns positive probability to every open subset of the Euclidean space where $u$ belongs. Then, using the definition of $\mu(u) = \Ns(0,I)$, we get that
        \begin{align}
            \hspace{-0.2cm} \pi(u) \!=\! \frac{1}{(2\pi)^{\frac{p}{2}}e^{1 + \frac{\lambda}{\tau}}} \!\exp\! \bigg( \!\!\!- \frac{1}{2} u^\top\! u - \frac{1}{\tau} (u^\top\! M u + b^\top u)\bigg).\!\!\! \label{eq:pi_form}
        \end{align}
        By simplifying the exponent in \eqref{eq:pi_form} by completion of squares, we obtain
        \begin{align}\nonumber\\[-2.2em]
            & -\frac{1}{2} u^\top u - \frac{1}{\tau} u^\top M u - \frac{1}{\tau} b^\top u \nonumber  \\
            & = - \frac{1}{2} \bigg(u + 2\bigg( \frac{\tau}{2}I +  M \bigg)^{-1} \!\!
            b\bigg)^\top \bigg( I + \frac{2}{\tau} M \bigg) \nonumber \\
            & \hspace{0.9cm} \bigg(u + 2\bigg( \frac{\tau}{2}I +  M \bigg)^{-1} \!\!b\bigg) + \frac{b^\top (I + 2M/\tau)^{-1}b}{2 \tau^2}, \!\!\label{eq:exp_decomp}
        \end{align}
where the existence of $((\nicefrac{\tau}{2}) I + M)^{-1}$ follows since $M \succeq 0$. Consequently, by using \eqref{eq:pi_form} and \eqref{eq:exp_decomp}, $\pi(u)$ can be found as 
        \begin{align*}\nonumber\\[-2.2em]
            \pi(u) \!=\! c(\lambda) \cdot e^{- \frac{1}{2} \big(u + 2\big( \frac{\tau}{2}I +  M \big)^{-1} \!\!b\big)^\top \big( I + \frac{2}{\tau} M \big)\big(u + 2\big( \frac{\tau}{2}I +  M \big)^{-1} \!\!b\big) }\\[-2.2em]
        \end{align*}
        where $c(\lambda)$ is a function of $\lambda$. Since $M \succeq 0$ and $\tau > 0$, the inverse term in the above expression of $\pi(u)$ exists. As 
        $\pi(u)$ has the form of Gaussian distribution we can choose the Lagrange multiplier $\lambda$ to satisfy the constraint in \eqref{eq:lag_const}. Hence, the distribution $\pi$ which solves the optimization problem in \eqref{eq:lag_const} is given by,
        \begin{align*}\\[-2.2em]
            \pi(u) = \Ns \big(-2\big( (\nicefrac{\tau}{2})I +  M \big)^{-1} b, \big( I + \nicefrac{2}{\tau} M \big)^{-1} \big).\\[-2.2em]
        \end{align*}
        The proof of the lemma is thus complete.
    \end{proof}
This lemma helps us characterize the NE of the ERGS game \eqref{eq:dyn}-\eqref{eq:ent_reg_cost} presented in the following theorem.
\begin{theorem} \label{thm:NE}
    Suppose that the prior policies $\mu^i$ are standard normal 
    for all $i$. Then, the Nash value of the $i^{th}$ agent is quadratic, i.e., $J^{i*}_t(x) = x^\top P^{i*}_t x + q^{i*}_t$ with $P^{i*}_t \succeq 0$ satisfying  the coupled Riccati equations
    \begin{align}
        & P^{i*}_{t} = Q^i_t + (A^{i*}_t)^\top P^{i*}_{t+1} A^{i*}_t - ((B^i_t)^\top P^{i*}_{t+1} A^{i*}_t)^\top \label{eq:CARE}\\
        & \hspace{0.1cm}  \big((\nicefrac{\tau}{2}) I + R^i_t + (B^i_t)^\top P^{i*}_{t+1} B^i_t \big)^{-1} ((B^i_t)^\top P^{i*}_{t+1} A^{i*}_t), \nonumber 
    \end{align}
    where $P^{i*}_T = Q^i_T$, $A^{i*}_t := A_t + \sum_{j \neq i} B^j_t K^{j*}_t$ and $q^{i*}_t \in \RR$ is defined recursively as
    \begin{align}\nonumber\\[-2.2em]
        q^{i*}_t = &\tr(\Sigma P^{i*}_{t+1}) + \tr((R^i_t + (B^i_t)^\top P^{i*}_{t+1} B^i_t) \Sigma^{i*}_t \nonumber \\
        &  + \frac{\tau}{2} \left(\tr(\Sigma^{i*}_t) - p - \log(|\Sigma^{i*}_t|)\right) + q^{i*}_{t+1} \nonumber\\
        &  + \tr \bigg(\sum_{j \neq i}\Sigma^{j*}_t (B^j_t)^\top P^{i*}_{t+1} B^j_t \bigg), \label{eq:qARE} 
    \end{align}
    and $q^{i*}_T = 0$. In addition, the NE policies of the $N$--agent ERGS game are linear Gaussian, i.e., $u^{i*}_t = \pi^{i*}_t(x) \sim \Ns(K^{i*}_t x,\Sigma^{i*}_t)$ where,
    \begin{align}\nonumber\\[-2.2em]
        K^{i*}_t  =& - \bigg( (\nicefrac{\tau}{2})I + R^i_t + (B^i_t)^\top P^{i*}_{t+1} B^i_t  \bigg)^{-1} (B^i_t)^\top P^{i*}_{t+1} A^{i*}_t, \nonumber\\
        \Sigma^{i*}_t =& \big(I + 2(R^i_t + (B^i_t)^\top P^{i*}_{t+1} B^i_t)/\tau \big)^{-1}. \label{eq:Nash_pol}
    \end{align}
\end{theorem}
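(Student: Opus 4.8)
The plan is to prove the statement by backward induction on the time index $t$, using the dynamic-programming (Bellman) form of the HJI equations for the entropy-regularized game. To verify that the proposed triples constitute a NE, I would fix all agents $j \neq i$ at their candidate policies $\pi^{j*}_s \sim \Ns(K^{j*}_s x, \Sigma^{j*}_s)$ and show that agent $i$'s best response is exactly $\pi^{i*}$, with Nash value $J^{i*}_t(x) = x^\top P^{i*}_t x + q^{i*}_t$. The base case is immediate: at $t = T$ the cost is $x^\top Q^i_T x$, so $P^{i*}_T = Q^i_T$ and $q^{i*}_T = 0$, matching the stated terminal conditions.

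For the inductive step I would assume $J^{i*}_{t+1}(x) = x^\top P^{i*}_{t+1} x + q^{i*}_{t+1}$ with $P^{i*}_{t+1} \succeq 0$. Substituting the other agents' mean feedback into \eqref{eq:dyn} yields the effective dynamics $x_{t+1} = A^{i*}_t x_t + B^i_t u^i_t + \eta_t$, where $A^{i*}_t = A_t + \sum_{j\neq i} B^j_t K^{j*}_t$ and the zero-mean disturbance $\eta_t$ collects the system noise $\omega_t$ together with the fluctuations $\sum_{j \neq i} B^j_t(u^j_t - K^{j*}_t x_t)$; crucially, $\eta_t$ is independent of $u^i_t$ given $x_t$, with covariance $\Sigma + \sum_{j \neq i} B^j_t \Sigma^{j*}_t (B^j_t)^\top$. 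Taking the conditional expectation of the quadratic continuation value then produces, up to $u^i_t$-independent constants, a term quadratic in $(x, u^i_t)$ plus the trace contributions $\tr(\Sigma P^{i*}_{t+1})$ and $\tr(\sum_{j\neq i}\Sigma^{j*}_t (B^j_t)^\top P^{i*}_{t+1} B^j_t)$, which are precisely the noise-dependent constants appearing in \eqref{eq:qARE}.

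The per-stage problem for agent $i$ thus reduces to minimizing $\EE_{u\sim\pi}[u^\top M u + b^\top u + \tau\log(\pi/\mu)]$ over $\pi \in \Ps(\RR^p)$ with $M = R^i_t + (B^i_t)^\top P^{i*}_{t+1} B^i_t \succeq 0$ and $b = 2(B^i_t)^\top P^{i*}_{t+1} A^{i*}_t x$. This is exactly the setting of Lemma \ref{lem:helper}, whose strict convexity delivers the unique linear Gaussian best response $\Ns(K^{i*}_t x, \Sigma^{i*}_t)$ with $K^{i*}_t$ and $\Sigma^{i*}_t$ as in \eqref{eq:Nash_pol}. Substituting this minimizer back, I would complete the square in $x$: the optimal mean contributes the value $-\tfrac14 b^\top(M+\tfrac{\tau}{2}I)^{-1}b$, which collects the $x^\top(\cdot)x$ terms into the Riccati recursion \eqref{eq:CARE} (symmetry of $P^{i*}_{t+1}$ turning the factor into the stated form), while the covariance minimizes $\tr((M+\tfrac{\tau}{2}I)\Sigma) - \tfrac{\tau}{2}\log|\Sigma|$ at $\Sigma^{i*}_t$. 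The leftover constants — the two trace terms above, the per-stage $\tr(M\Sigma^{i*}_t)$, the entropy offset $\tfrac{\tau}{2}(\tr(\Sigma^{i*}_t) - p - \log|\Sigma^{i*}_t|)$ coming from the Gaussian relative entropy against $\Ns(0,I)$, and $q^{i*}_{t+1}$ — assemble into \eqref{eq:qARE}, closing the induction with $P^{i*}_t \succeq 0$.

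The main obstacle is the simultaneity of the best responses: at each time $t$ the gain $K^{i*}_t$ depends, through $A^{i*}_t$, on the gains $\{K^{j*}_t\}_{j\neq i}$ being computed at the same stage, so \eqref{eq:Nash_pol} is really a coupled fixed-point system across agents that must be solvable for the backward recursion to advance. I would treat this coupling as the defining relation for $\{K^{i*}_t\}_{i\in[N]}$ given $\{P^{i*}_{t+1}\}$ and defer its solvability and uniqueness to the subsequent $\tau$-dependent condition. A secondary point requiring care is that the minimization in Lemma \ref{lem:helper} ranges over the full space of probability distributions, so the argument certifies optimality of the Gaussian policy against \emph{arbitrary} admissible deviations $\pi^i \in \Pi^i$, which is precisely what the NE inequality in Definition \ref{def:NE} demands.
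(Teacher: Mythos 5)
Your proposal is correct and follows essentially the same route as the paper: backward induction on $t$ via the HJI equations, reduction of the per-stage problem to the entropy-regularized quadratic minimization of Lemma \ref{lem:helper}, and completion of the square to recover \eqref{eq:CARE}--\eqref{eq:Nash_pol}; your explicit packaging of the other agents' action fluctuations into a zero-mean disturbance $\eta_t$, and your explicit flagging of the same-stage coupling among the gains $K^{i*}_t$, are only presentational refinements of what the paper does.
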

\begin{proof}
    We start by proposing that the form of Nash value function \eqref{eq:ent_reg_value} is quadratic, which we will prove by induction. Since we know that $J^{i*}_T(x) = x^\top Q^i_T x$, let
    \begin{align} \nonumber\\[-2.2em]
        J^{i*}_{t+1} = x^\top P^{i*}_{t+1} x + q^{i*}_{t+1}.
        \label{eq:induction}\\[-2.2em]\nonumber
    \end{align}
    for some $t \in [0,T-1]$. From HJI equations \cite{bacsar1998dynamic}, 
    we know 
    \begin{align*}\\[-2.2em]
        & J^{i*}_t (x) = \min_\pi \EE_{u \sim \pi(\cdot)} \bigg[ x^\top Q^i_t x + u^\top R^i_t u + \tau \log \bigg(\frac{\pi(u | x)}{\mu(u|x)} \bigg) \\
        & \hspace{3cm} + J^{i*}_{t+1} \big(L^{i*}_t + B^i_t u + \omega_t \big) \bigg],\\[-2.2em]
    \end{align*}
    where $L^{i*}_t(x) := A_t x + \sum_{j \neq i} B^j_t u^{j*}_t(x)$. By using \eqref{eq:induction}, we get
    \begin{align}
        & J^{i*}_t (x) = \min_\pi \EE_{u \sim \pi(\cdot)} \bigg[ x^\top Q^i_t x + u^\top R^i_t u + \tau \log \bigg(\frac{\pi(u | x)}{\mu(u|x)} \bigg) \nonumber \\
        & + q^{i*}_{t+1} + \big(L^{i*}_t(x) + B^i_t u + \omega_t \big)^\top \!P^{i*}_{t+1} \big(L^{i*}_t(x) + B^i_t u \!+ \!\omega_t \big)   \bigg] \nonumber \\
        & = x^\top Q^i_t x + \tr(\Sigma P^{i*}_{t+1}) \!+ \!q^{i*}_{t+1} + \EE \big[(L^{i*}_t(x))^\top P^{i*}_{t+1} L^{i*}_t(x) \big] \nonumber \\
        & \hspace{0.4cm} + \min_\pi \EE_{u \sim \pi(\cdot)} \bigg[ u^\top (R^i_t + (B^i_t)^\top P^{i*}_{t+1} B^i_t) u \nonumber \\
        & \hspace{1.5cm} +\! 2 u^\top\! (B^i_t)^\top P^{i*}_{t+1} L^{i*}_t(x) \!+\! \tau \log \bigg(\!\frac{\pi(u | x)}{\mu(u|x)}\! \bigg) \!\bigg]. \!\!\label{eq:value_inter}
    \end{align}
    Using Lemma \ref{lem:helper}, we deduce that the NE policy of agent $i$ is given as 
    \begin{align}
        & u^{i*}_t \sim \pi^{i*}_t(x) = \Ns \big( M^{i*}_t(x), \Sigma^{i*}_t\big), \label{eq:NE_pol_form}
    \end{align}
    where
    \begin{align}
        M^{i*}_t & = - \Big(\frac{\tau}{2}I + R^i_t + (B^i_t)^\top P^{i*}_{t+1} B^i_t \Big)^{-1} (B^i_t)^\top P^{i*}_{t+1} L^{i*}_t(x), \nonumber \\
        \Sigma^{i*}_t & = \big(I + 2(R^i_t + (B^i_t)^\top P^{i*}_{t+1} B^i_t )/\tau \big)^{-1}. \label{eq:NE_pol_part}
    \end{align}
    By direct calculation we can check that the set of policies 
        \begin{align}
        & u^{i*}_t \sim \pi^{i*}_t(x) = \Ns \big( K^{i*}_t x, \Sigma^{i*}_t\big), \text{ with} \label{eq:NE_pol_form_2} \\
        K^{i*}_t & = - \Big((\nicefrac{\tau}{2})I + R^i_t + (B^i_t)^\top P^{i*}_{t+1} B^i_t \Big)^{-1} (B^i_t)^\top P^{i*}_{t+1} A^{i*}_t, \nonumber \\
        \Sigma^{i*}_t & = \big(I + 2(R^i_t + (B^i_t)^\top P^{i*}_{t+1} B^i_t )/\tau \big)^{-1}, \label{eq:NE_pol_part_2}
    \end{align}
    satisfies the conditions \eqref{eq:NE_pol_form} and \eqref{eq:NE_pol_part}, and hence, also the necessary conditions for 
    the NE. Moreover, since the cost functional in \eqref{eq:value_inter} is strictly convex (due to Lemma \ref{lem:helper}), we see that
    \eqref{eq:NE_pol_form_2}-\eqref{eq:NE_pol_part_2} is also sufficient for 
 the NE. 
    
    Having proved the form of the NE policies, we now prove \eqref{eq:CARE} and \eqref{eq:qARE}.  
    Using results from \cite{robert1996intrinsic}, the negative relative entropy term can be computed as
    \begin{align}
        & \EE_{\pi^{i*}_t} \bigg[ \log \bigg(\frac{\pi^{i*}_t(u | x)}{\mu^{i}_t(u | x)} \bigg) \bigg] = \label{eq:neg_int}\\
        & \hspace{1.5cm} \frac{1}{2} \big( (K^{i*}_t x)^\top K^{i*}_t x + \tr(\Sigma^{i*}_t) - p - \log(|\Sigma^{i*}_t|) \big). \nonumber
    \end{align}
    Then, by substituting \eqref{eq:NE_pol_form_2} and \eqref{eq:neg_int} into \eqref{eq:value_inter}, we arrive at
    \begin{align*}
        & J^{i*}_t(x) = x^\top Q^i_t x + \tr(\Sigma P^{i*}_{t+1}) + (A^{i*}_t x)^\top P^{i*}_{t+1} A^{i*}_t x + q^{i*}_{t+1} \\
        & + \tr((R^i_t + (B^i_t)^\top P^{i*}_{t+1} B^i_t)\Sigma^{i*}_t) + 2 (B^i_t K^{i*}_t x)^\top P^{i*}_{t+1} A^{i*}_t x \\
        & + (K^{i*}_t x)^\top (R^i_t + (B^i_t)^\top P^{i*}_{t+1} B^i_t) K^{i*}_t x + \frac{\tau}{2}  \big( (K^{i*}_t x)^\top K^{i*}_t x \\
        & + \tr(\Sigma^{i*}_t) - p - \log(|\Sigma^{i*}_t|) \!+\! \tr \bigg(\sum_{j \neq i}\Sigma^{j*}_t (B^j_t)^\top P^{i*}_{t+1} B^j_t \bigg).
    \end{align*}
    Substituting \eqref{eq:NE_pol_part_2} in the above yields 
    \begin{align*}\\[-2em]
        & J^{i*}_t(x) = x^\top \bigg( Q^i_t  + (A^{i*}_t)^\top P^{i*}_{t+1} A^{i*}_t +  \\
        &  - 2 ( (B^i_t)^\top P^{i*}_{t+1} A^{i*}_t)^\top \Big(\frac{\tau}{2}I + R^i_t + (B^i_t)^\top P^{i*}_{t+1} B^i_t \Big)^{-1} \\
        & \hspace{6cm}(B^i_t)^\top P^{i*}_{t+1} A^{i*}_t \\
        &  - ( (B^i_t)^\top P^{i*}_{t+1} A^{i*}_t)^\top \Big(\frac{\tau}{2}I + R^i_t + (B^i_t)^\top P^{i*}_{t+1} B^i_t \Big)^{-2} \\
        & \hspace{6cm}(B^i_t)^\top P^{i*}_{t+1} A^{i*}_t \\
        &  - ( (B^i_t)^\top P^{i*}_{t+1} A^{i*}_t)^\top \Big(\frac{\tau}{2}I + R^i_t + (B^i_t)^\top P^{i*}_{t+1} B^i_t \Big)^{-1}\\
        & \hspace{0.9cm} (R^i_t + (B^i_t)^\top P^{i*}_{t+1} B^i_t) \Big(\frac{\tau}{2}I + R^i_t + (B^i_t)^\top P^{i*}_{t+1} B^i_t \Big)^{-1} \\
        & \hspace{6cm}(B^i_t)^\top P^{i*}_{t+1} A^{i*}_t \bigg)x \\
        & + \tr(\Sigma P^{i*}_{t+1}) + \tr((R^i_t + (B^i_t)^\top P^{i*}_{t+1} B^i_t)\Sigma^{i*}_t) \\
        & + \frac{\tau}{2} \big( \tr(\Sigma^{i*}_t) - p - \log(|\Sigma^{i*}_t|) \big) + q^{i*}_{t+1} \\
        & \hspace{3.5cm} + \tr \bigg(\sum_{j \neq i}\Sigma^{j*}_t (B^j_t)^\top P^{i*}_{t+1} B^j_t \bigg).
    \end{align*}
    Simplifying the above expression and using \eqref{eq:CARE}-\eqref{eq:qARE} finally yields
    \begin{align*}
        & J^{i*}_t(x) = x^\top \bigg( Q_t + (A^{i*}_t)^\top P^{i*}_t A^{i*}_t - (B^\top_t P^{i*}_{t+1} A^{i*}_t)^\top \end{align*}\begin{align*}
        & \hspace{0.1cm} \big(\nicefrac{\tau}{2} I + R^i_t + (B^i_t)^\top P^{i*}_{t+1} B^i_t \big)^{-1} (B^\top_t P^{i*}_{t+1} A^{i*}_t) \bigg) x \\
        & + \tr(\Sigma P^{i*}_{t+1}) + \tr((R^i_t + (B^i_t)^\top P^{i*}_{t+1} B^i_t)\Sigma^{i*}_t) \\
        & + \frac{\tau}{2} \big( \tr(\Sigma^{i*}_t) - p - \log(|\Sigma^{i*}_t|) \big) \\
        & + q^{i*}_{t+1} + \tr \bigg(\sum_{j \neq i}\Sigma^{j*}_t (B^j_t)^\top P^{i*}_{t+1} B^j_t \bigg)= x^\top P^{i*}_t x + q^{i*}_t,
    \end{align*}
    which completes the proof.
\end{proof}
The proof of Theorem \ref{thm:NE} uses HJI equations \cite{bacsar1998dynamic} which reduce the problem to finding the most optimal probability distribution for each agent (in each timestep) given that the other agents follow NE policies. This simplification coupled with the fact that entropy regularized quadratic function is minimized by a Linear Guassian probability distribution (Lemma \ref{lem:helper}), helps us arrive at the conclusion. 

Theorem \ref{thm:NE} also states that although there may be multiple NE (any solution to \eqref{eq:CARE} will specify a NE), all NE must have a linear Gaussian structure given by \eqref{eq:Nash_pol}. Since all policies belong to the class of linear Gaussian distributions, we will henceforth restrict our discussion to policies of the form $S^i_t = (K^i_t,\Sigma^i_t)$ with $K^i_t \in \RR^{p \times m}$ and $\Sigma^i_t \in \RR^{p \times p}$ 
where $\Sigma^i_t$ is a symmetric positive definite matrix, such that $\pi^i_t(x) = u^i_t \sim \Ns(K^i_t x, \Sigma^i_t)$. We also denote the joint policy at time $t \in \{0,\ldots,T-1 \}$ as $S_t := (S^i_t)_{ i \in [N]}$ and the joint policy for agent $i \in [N]$ as $S^i := (S^i_t)_{0 \leq t \le T-1}$. We further denote the \textit{set} of all linear Gaussian policies for agent $i$ as $\Ss^i_t$, joint policies at time $t$ by $\Ss_t$ and the set of all joint policies for agent $i$ by $\Ss^i$. In keeping with game theoretic notation, superscript $-i$ denotes the set of all policies except for agent $i$; in particular, we have that $\Ss^{-i} = (\Ss^j)_{ j \in [N]\setminus \{i\}}$.

To provide sufficient conditions for the existence and uniqueness of the Nash equilibrium, we first introduce a lower bound on the entropy regularization parameter $\tau$.
\begin{assumption} \label{asm:tau}
    Assume $\tau > 2 \gamma^2_B \gamma^*_P (N-1)$, where $\gamma_B := \max_{\forall i,t} \lVert B^i_t \rVert_F$ and $\gamma^*_P := \max_{\forall i,t} \lVert P^{i*}_t \rVert_F$.
\end{assumption}
This recursive assumption is similar in structure to the condition on attenuation parameter in robust control problems \cite{bacsar2008h,zaman2024robust}. Under this lower bound assumption on $\tau$, there exists a unique NE of the game \eqref{eq:dyn}-\eqref{eq:ent_reg_cost} as proved below.
\begin{lemma} \label{Lemma:III-4}
    Under Assumption \ref{asm:tau}, the NE of the ERGS LQ Game \eqref{eq:dyn}-\eqref{eq:ent_reg_cost} is unique.
\end{lemma}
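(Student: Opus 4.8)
The plan is to prove uniqueness by backward induction on the time index $t$, leveraging the structural result of Theorem \ref{thm:NE} that every NE is a linear Gaussian policy completely determined by a solution $\{P^{i*}_t\}_{i,t}$ of the coupled Riccati recursion \eqref{eq:CARE}, with gains and covariances recovered through \eqref{eq:Nash_pol}. It therefore suffices to show that \eqref{eq:CARE}--\eqref{eq:Nash_pol} admits a unique solution. Concretely, I would suppose two NE exist, with value matrices $\{P^{i*}_t\}$ and $\{\tilde P^{i*}_t\}$ and associated gains $\{K^{i*}_t\}$, $\{\tilde K^{i*}_t\}$, and establish $P^{i*}_t = \tilde P^{i*}_t$ for all $i$ and $t$; since $\Sigma^{i*}_t$ in \eqref{eq:Nash_pol} depends only on $P^{i*}_{t+1}$, this forces the full policies to coincide.

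The base case $t=T$ is immediate from the terminal condition $P^{i*}_T = \tilde P^{i*}_T = Q^i_T$. For the inductive step, I assume $P^{i*}_{t+1} = \tilde P^{i*}_{t+1}$ for every $i$, so that the crux becomes uniqueness of the gains $\{K^{i*}_t\}_i$ at time $t$ for this common $P^{i*}_{t+1}$. Substituting $A^{i*}_t = A_t + \sum_{j\neq i} B^j_t K^{j*}_t$ into the gain formula exhibits the collection $\{K^{i*}_t\}_i$ as a fixed point of the affine map whose $i$-th component sends $\{K^j\}_j$ to $-G^i_t\big(A_t + \sum_{j\neq i} B^j_t K^j\big)$, where $G^i_t := \big((\nicefrac{\tau}{2})I + R^i_t + (B^i_t)^\top P^{i*}_{t+1} B^i_t\big)^{-1}(B^i_t)^\top P^{i*}_{t+1}$.

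The key step is to show this map is a contraction under Assumption \ref{asm:tau}. Subtracting the two fixed-point relations gives $K^{i*}_t - \tilde K^{i*}_t = -G^i_t\sum_{j\neq i} B^j_t (K^{j*}_t - \tilde K^{j*}_t)$. Since $R^i_t \succ 0$ and $(B^i_t)^\top P^{i*}_{t+1} B^i_t \succeq 0$, the inner matrix dominates $(\nicefrac{\tau}{2})I$, so its inverse has spectral norm at most $2/\tau$; combined with $\|(B^i_t)^\top P^{i*}_{t+1}\|_F \le \gamma_B \gamma^*_P$ and $\|B^j_t\|_F \le \gamma_B$ (using submultiplicativity and $\|\cdot\|_{\mathrm{op}}\le\|\cdot\|_F$), this yields $\|K^{i*}_t - \tilde K^{i*}_t\|_F \le (2/\tau)\gamma^2_B \gamma^*_P \sum_{j\neq i}\|K^{j*}_t - \tilde K^{j*}_t\|_F$. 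Taking the maximum over $i$ and invoking $\tau > 2\gamma^2_B \gamma^*_P (N-1)$ makes the factor $(2/\tau)\gamma^2_B \gamma^*_P (N-1)$ strictly less than one, forcing that maximum to vanish, i.e. $K^{i*}_t = \tilde K^{i*}_t$ for all $i$. Then $A^{i*}_t$ is common to both candidates and \eqref{eq:CARE} gives $P^{i*}_t = \tilde P^{i*}_t$, closing the induction; the same Banach-type argument yields existence of the gains at each step, so the recursion is well defined.

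The main obstacle is the apparently circular phrasing of Assumption \ref{asm:tau}, whose bound uses $\gamma^*_P = \max_{i,t}\|P^{i*}_t\|_F$, a quantity defined through the very NE value matrices whose uniqueness is at stake. The resolution is that the contraction estimate never uses uniqueness of $P^{i*}_{t+1}$: in the inductive step the relevant $P^{i*}_{t+1}$ has already been shown to be common to the two candidates, and any NE value matrix is bounded in Frobenius norm by $\gamma^*_P$ by definition, so $\|P^{i*}_{t+1}\|_F \le \gamma^*_P$ is exactly the a priori bound the argument requires. The remaining technical points -- the consistent mixing of spectral and Frobenius bounds and the propagation of the per-agent inequality into a single scalar contraction -- are routine.
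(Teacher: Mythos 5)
Your proposal is correct and follows essentially the same route as the paper: both reduce uniqueness to the unique solvability of the linear system determining the gains $\{K^{i*}_t\}_i$ at each time step, with Assumption \ref{asm:tau} supplying the bound $2\gamma^2_B\gamma^*_P(N-1)/\tau < 1$. The only difference is presentational -- the paper stacks the gain equations and invokes block diagonal dominance of $\Phi_t$ to get invertibility, while you phrase the same estimate as a Banach contraction and make the backward induction (and the non-circularity of using $\gamma^*_P$ as an a priori bound) explicit, which the paper leaves implicit.
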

\begin{proof}
    Using \eqref{eq:Nash_pol} for each $i \in [N]$, we have that
    \begin{align*}
        \Big(\frac{\tau}{2}I + R^i_t + (B^i_t)^\top P^{i*}_{t+1} B^i_t \Big)K^{i*}_t = - (B^i_t)^\top P^{i*}_{t+1} A^{i*}_t.
    \end{align*}
    This gives us a joint set of equations,
    \begin{align*}
        \Phi_t \begin{pmatrix} K^{i*}_t \\ \vdots \\ K^{N*}_t \end{pmatrix} = -\begin{pmatrix} (B^1_t)^\top P^{1*}_{t+1} A_t \\ \vdots \\ (B^N_t)^\top P^{N*}_{t+1} A_t \end{pmatrix},
    \end{align*}
    where $\Phi_t$ is a block diagonal matrix with 
    diagonal entries $(\nicefrac{\tau}{2}) I + R^i_t + (B^i_t)^\top P^{i*}_{t+1} B^i_t$ and the off-diagonal $(i,j)^{th}$ entries (for $i\neq j$) as $(B^i_t)^\top P^{i*}_{t+1} B^j_t$. Then, if $\tau > 2 \gamma^2_B \gamma^*_P (N-1)$ by Assumption \ref{asm:tau}, we have that $\Phi_t$ is diagonally dominant and hence, the linear components of the NE policies is uniquely determined by
    \begin{align*}\\[-2em]
        \begin{pmatrix} K^{i*}_t \\ \vdots \\ K^{N*}_t \end{pmatrix} = - \Phi^{-1}_t \begin{pmatrix} (B^1_t)^\top P^{1*}_{t+1} A_t \\ \vdots \\ (B^N_t)^\top P^{N*}_{t+1} A_t \end{pmatrix}.\\[-2em]
    \end{align*}
    The Riccati matrices $P^{i*}_t$ and the covariance matrices $\Sigma^{i*}_t$ are now determined by the uniquely solvable set of equations \eqref{eq:CARE}-\eqref{eq:Nash_pol}. The proof is thus complete.
\end{proof}
The proof of Lemma~\ref{Lemma:III-4} relies on the invertibility of the $\Phi_t$ matrix at each timestep $t$, similar to the invertibility conditions in non-regularized LQ games \cite{hambly2023policy}, but relies on the entropy regularization parameter. Having established sufficient conditions for the existence and uniqueness of NE \eqref{eq:Nash_pol}
, next we move to PO methods for computation of the same.

\section{Policy Optimization (PO) \& Non-Asymptotic Analysis}
In this section we first propose a PO algorithm to compute the NE of the ERGS \eqref{eq:dyn}-\eqref{eq:ent_reg_cost}. PO algorithms are important since they form the building blocks of data driven RL techniques. Then, under the assumption on the regularization parameter $\tau$, we prove linear convergence of the proposed algorithm to the NE of the game. Additionally, in Section \ref{subsec:delta_aug} we will also show that in the case that this assumption is not satisfied, we can still compute an $\epsilon$-NE of the game by artificially injecting regularization in the cost.
Consider a policy $S^i_t = (K^i_t, \Sigma^i_t) \in \Ss^i_t$ for agent $i \in [N]$ and $0 \leq t \leq T-1$, and a  joint policy over all agents $i \in [N]$ for a given $0 \leq t \leq T-1$ as $S_t:= (S^i_t)_{\forall i} = (K^i_t, \Sigma^i_t)_{\forall i}$. For the space of joint policies $\Ss_t = \prod_{i \in [N]} \Ss^i_t$, we define the metric
\begin{align*}
    \lVert S_t \rVert_\square = \sum_{i \in [N]} \lVert K^i_t \rVert_F + \lVert \Sigma^i_t \rVert_F.\\[-2em]
\end{align*}

Using Theorem \ref{thm:NE}, let us also define the best response map of agent $i$ at time $t$ to  the set of policies $S_t$ as $\Psi^i_t(\cdot) : \Ss_t \rightarrow \Ss^i_t$ such that
    \begin{align*}\\[-2em]
        \Psi^i_t \big(S_t|S_{[t+1:T-1]} \big) = (K^{i\prime}_t, \Sigma^{i\prime}_t), 
    \end{align*}
    where $S_{[s:t]} := \{S_{s},\ldots,S_{t}\}$ for $s \leq t$ and
    \begin{align}\nonumber\\[-2em]
        K^{i\prime}_t & = - \bigg( \frac{\tau}{2}I + R^i_t + (B^i_t)^\top P^i_{t+1} B^i_t  \bigg)^{-1} (B^i_t)^\top P^i_{t+1} A^i_t, \nonumber \\
        \Sigma^{i\prime}_t &= \big(I + 2(R^i_t + (B^i_t)^\top P^i_{t+1} B^i_t)/\tau \big)^{-1}. \label{eq:best_response}\\[-2em]\nonumber
    \end{align}
    and the $P^i_{t+1}$ matrices are obtained using Lyapunov equations 
    \begin{align*}
        & P^i_s = Q^i_s + (K^i_s)^\top ((\nicefrac{\tau}{2}) I +  R^i_s) K^i_s \\
        & \hspace{2cm} +  (A^i_s + B^i_s K^i_s)^\top P^i_{s+1} (A^i_s + B^i_s K^i_s),
    \end{align*}
    for $t+1 \leq s \leq T-1$ with $A^i_s = A_s + \sum_{j \neq i} B^j_s K^j_s$ and $P^i_T = Q^i_T$. The matrices $P^i_t$ are symmetric positive semidefinite by definition and consequently the inverses in \eqref{eq:best_response} exist. The composite best response operator is denoted by $\Psi_t(\cdot) : \Ss_t \rightarrow \Ss_t$ as defined as
    \begin{align*}
        \Psi_t(S_t|S_{[t+1:T-1]}) = \big(\Psi^i_t(S_t|S_{[t+1:T-1]})\big)_{\forall i}.
    \end{align*}
   From Definition \ref{def:NE}, we know that a joint policy $S^*_t$ which satisfies $S^*_t = \Psi_t\big(S^*_t|S^*_{[t+1:T-1]}\big)$ is also the NE of the ERGS LQ Game \eqref{eq:dyn}-\eqref{eq:ent_reg_cost}.

   In Algorithm \ref{alg:RL_ER_Games}, we adopt a receding-horizon methodology \cite{zhang2023revisiting}, wherein the NE policy at timestep $T-1$ is approximated (by multiple iterations of the best response operator $\Psi_t$), and followed by a backward-in-time progression to $T-2$ and subsequent time steps. Conceptually, this process mirrors an approximate solution to the HJI equations \cite{bacsar1998dynamic}. 
\begin{algorithm}[t!]
	\caption{PO for ERGS LQ Games}
	\begin{algorithmic}[1] \label{alg:RL_ER_Games}
		\STATE {Initialize $S^i_t = (K^i_t, \Sigma^i_t) =0$ for all $i \in [N], t \in \{0,\ldots,T-1\}$}
        \FOR {$t = T-1,\ldots,1,0,$}
        \FOR {$l = 1,\ldots,L$}
        \STATE $S^{i\prime}_t = \Psi^i_t \big( S_t | S_{[t+1:T-1]} \big)$, $\forall i \in [N]$
        \STATE $S^i_t \leftarrow S^{i\prime}_t$, $\forall i \in [N]$
        \ENDFOR
        \ENDFOR
        \STATE \textbf{Output: } $(S_t)_{0 \leq t \leq T-1}$
	\end{algorithmic}
\end{algorithm}

In the following lemma, we show that if $\tau$ satisfies a lower bound, then the $\Psi_t$ operator is contractive, and hence Algorithm \ref{alg:RL_ER_Games} can achieve the NE.

\begin{lemma} \label{lem:contract}
    For a fixed $t \in \{0,\ldots,T-1\}$, the operator $\Psi_t(\cdot|S_{[t+1:T-1]})$ is contractive for a fixed set of controllers $S_{[t+1:T-1]}$, if $\tau > 2 \gamma^2_B \gamma_{P,t} (N-1)$ where $\gamma_{P,t} := \max_{\forall i} \lVert P^i_{t+1} \rVert_F$.


\end{lemma}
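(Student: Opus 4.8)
The plan is to show directly that $\Psi_t(\cdot\,|S_{[t+1:T-1]})$ contracts the metric $\lVert\cdot\rVert_\square$ with factor $2\gamma_B^2\gamma_{P,t}(N-1)/\tau$, which is strictly below one precisely when $\tau > 2\gamma_B^2\gamma_{P,t}(N-1)$. The first thing I would record is that the covariance component decouples: from \eqref{eq:best_response}, $\Sigma^{i\prime}_t = (I + 2(R^i_t + (B^i_t)^\top P^i_{t+1}B^i_t)/\tau)^{-1}$ depends only on $P^i_{t+1}$, and since $S_{[t+1:T-1]}$ is held fixed, the Lyapunov recursion fixes every $P^i_{t+1}$ independently of the input $S_t$. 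Hence $\Sigma^{i\prime}_t$ is a constant map and contributes nothing to the distance between two outputs. All of the contraction must therefore come from the gain matrices $K^{i\prime}_t$.

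Next I would isolate how $K^{i\prime}_t$ depends on the input. Writing $H^i_t := (\nicefrac{\tau}{2})I + R^i_t + (B^i_t)^\top P^i_{t+1} B^i_t$, which is again a fixed matrix because $P^i_{t+1}$ is fixed, and recalling $A^i_t = A_t + \sum_{j\neq i} B^j_t K^j_t$, the best response reads $K^{i\prime}_t = -(H^i_t)^{-1}(B^i_t)^\top P^i_{t+1}(A_t + \sum_{j\neq i}B^j_t K^j_t)$. This is affine in the gains $(K^j_t)_j$ with a fixed linear part, so for two inputs $S_t$ and $\tilde{S}_t$ the constant $A_t$ term cancels and $K^{i\prime}_t - \tilde{K}^{i\prime}_t = -(H^i_t)^{-1}(B^i_t)^\top P^i_{t+1}\sum_{j\neq i}B^j_t(K^j_t - \tilde{K}^j_t)$.

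Then I would bound this in Frobenius norm. Using submultiplicativity together with $\lVert\cdot\rVert_{op}\le\lVert\cdot\rVert_F$, and the spectral bound $H^i_t \succeq (\nicefrac{\tau}{2})I$ (hence $\lVert(H^i_t)^{-1}\rVert_{op}\le 2/\tau$) coming from $R^i_t \succ 0$ and $P^i_{t+1}\succeq 0$, each summand is controlled by $(2/\tau)\gamma_B\gamma_{P,t}\gamma_B\lVert K^j_t - \tilde{K}^j_t\rVert_F$. Summing over $i$ and interchanging the order of summation, so that each index $j$ is counted $N-1$ times, gives $\sum_i \lVert K^{i\prime}_t - \tilde{K}^{i\prime}_t\rVert_F \le (2\gamma_B^2\gamma_{P,t}(N-1)/\tau)\sum_j\lVert K^j_t-\tilde{K}^j_t\rVert_F$. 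Because the covariance terms vanish, the left side equals $\lVert\Psi_t(S_t)-\Psi_t(\tilde{S}_t)\rVert_\square$ while the right side is at most the same factor times $\lVert S_t-\tilde{S}_t\rVert_\square$, which is the claimed contraction.

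The computation is essentially routine; the only place needing care, and the sole real obstacle, is the matrix-norm bookkeeping: one must freely pass between operator and Frobenius norms (in particular bounding $(H^i_t)^{-1}$ by its operator norm rather than its Frobenius norm, which would carry an unwanted dimension dependence) while keeping the constants $\gamma_B$ and $\gamma_{P,t}$ tight, and carry out the double-sum reindexing that produces the factor $N-1$. Recognizing at the outset that the covariance map is constant is what makes the argument clean, since it collapses the whole question to a linear contraction estimate on the gains.
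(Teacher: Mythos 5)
Your proposal is correct and follows essentially the same route as the paper's proof: both observe that only the gain component of the best response varies with $S_t$ (the covariance being determined by the fixed $P^i_{t+1}$), express the difference of best-response gains as $-(H^i_t)^{-1}(B^i_t)^\top P^i_{t+1}\sum_{j\neq i}B^j_t(K^j_t-\tilde K^j_t)$, and bound it via $\lVert (H^i_t)^{-1}\rVert \le 2/\tau$ together with $\gamma_B$ and $\gamma_{P,t}$, with the double-sum reindexing producing the factor $N-1$. Your write-up is in fact slightly more explicit than the paper's about the constancy of the covariance map and the operator-norm versus Frobenius-norm bookkeeping.
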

\begin{proof}
    Consider $S_t,S^\prime_t \in \Ss$, where $S_t = (K^i_t, \Sigma^i_t)_{\forall i}$ and $S^\prime_t = (K^{i\prime}_t, \Sigma^{i\prime}_t)_{\forall i}$. Then,
    \begin{align*}\\[-2em]
        & \lVert \Psi_t(S_t|S_{[t+1:T-1]}) - \Psi_t(S'_t|S_{[t+1:T-1]}) \rVert_\square\\
        & =  \sum_{i \in [N]} \bigg\lVert \bigg( \frac{\tau}{2}I + R^i_t + (B^i_t)^\top P^i_{t+1} B^i_t  \bigg)^{-1} \\
        & \hspace{3.7cm} (B^i_t)^\top P^i_{t+1}   \sum_{j \neq i} B^j_t (K^j_t  - K^{j\prime}_t) \bigg\rVert_F \\
        & \leq \frac{2}{\tau} \gamma^2_B \gamma_P (N-1) \sum_{i \in [N]} \lVert K^j_t  - K^{j\prime}_t \rVert_F + \lVert \Sigma^j_t  - \Sigma^{j\prime}_t \rVert_F \\
        & = \frac{2}{\tau} \gamma^2_B \gamma_P (N-1) \lVert S_t - S^\prime_t \rVert_\square,\\[-2em]
    \end{align*}
    where $\gamma_{P,t} := \max_{\forall i} \lVert P^i_{t+1} \rVert_F$. Hence, if $\tau > 2 \gamma^2_B \gamma_P (N-1)$ then $\Psi_t$ is contractive. Thus,  we complete the proof.
\end{proof}
Utilizing this lemma and results in reference \cite{zaman2024independent}
, we now prove that if the number of inner-loop iterations $L$ in Algorithm \ref{alg:RL_ER_Games} is large enough, then the output of the algorithm $(S_t)_{0 \leq t \leq T-1} \approx (S^*_t)_{0 \leq t \leq T-1}$, where $(S^*_t)_{0 \leq t \leq T-1}$ is the NE of the ERGS LQ Game \eqref{eq:dyn}-\eqref{eq:ent_reg_cost}. Furthermore, the Nash error decreases at a linear rate with respect to the inner-loop iterations $L$.
\begin{theorem} \label{thm:main_res}
    The output of Algorithm \ref{alg:RL_ER_Games}, $(S_t)_{0 \leq t \leq T-1}$ is $\epsilon > 0$ close to the NE $(S^*_t)_{0 \leq t \leq T-1}$, s.t. $\sum_{0 \leq t \leq T-1} \lVert S_t - S^*_t \rVert_\square = \Os(\epsilon)$, if $L = \Os(\log(1/\epsilon))$ and $\tau > 2 \gamma^2_B (\gamma^*_P + c \epsilon) (N-1)$, where $c$ is a model dependent constant.
\end{theorem}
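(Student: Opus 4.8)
The plan is to run a backward induction over the time index $t$, tracking the accumulated Nash error $e_t := \sum_{s=t}^{T-1} \lVert S_s - S^*_s \rVert_\square$ produced by the outer loop of Algorithm \ref{alg:RL_ER_Games}. At the terminal step $t = T-1$ there are no downstream policies, so $P^i_T = Q^i_T = P^{i*}_T$ exactly and the only error is inner-loop truncation. For $t < T-1$, the operator $\Psi_t(\cdot \mid S_{[t+1:T-1]})$ is evaluated at the \emph{approximate} downstream policies already fixed by the algorithm, so its fixed point differs from the true NE slice $S^*_t$; I must therefore control both the inner-loop truncation error and this downstream perturbation.

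First I would establish the per-step inner-loop contraction. By Lemma \ref{lem:contract}, for fixed downstream policies $\Psi_t(\cdot \mid S_{[t+1:T-1]})$ is a contraction with modulus $\rho_t = \tfrac{2}{\tau}\gamma^2_B \gamma_{P,t}(N-1) < 1$, provided $\tau > 2\gamma^2_B \gamma_{P,t}(N-1)$, where $\gamma_{P,t} = \max_i \lVert P^i_{t+1}\rVert_F$ is formed from the approximate downstream policies. Since each inner loop starts from the initialization $S^i_t = 0$, running it $L$ times drives the iterate to within $\rho_t^L$ of the unique fixed point $\bar S_t := \Psi_t(\bar S_t \mid S_{[t+1:T-1]})$, i.e. $\lVert S_t - \bar S_t \rVert_\square \le \rho_t^L D_0$ for a bounded, model-dependent constant $D_0 = \sup_t \lVert \bar S_t \rVert_\square$.

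Second, the perturbation step. I would show that the best-response fixed point $\bar S_t$ under the approximate downstream policies is Lipschitz in the downstream error $e_{t+1}$. This follows because the Lyapunov recursion defining $P^i_{t+1}$ is smooth in the policies $S_{[t+1:T-1]}$, while the true NE slice $S^*_t$ is the fixed point of $\Psi_t(\cdot \mid S^*_{[t+1:T-1]})$; composing the Lipschitz dependence of the $P$-matrices on the downstream policies with the Lipschitz dependence of the (diagonally dominant, hence uniformly invertible) best-response map on those $P$-matrices gives $\lVert \bar S_t - S^*_t \rVert_\square \le c' e_{t+1}$ for a model-dependent constant $c'$, invoking the coupled Riccati/Lyapunov perturbation bounds of \cite{zaman2024independent}. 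Moreover, since every $P^i_{t+1}$ then stays within $O(e_{t+1}) = O(\epsilon)$ of $P^{i*}_{t+1}$, we obtain $\gamma_{P,t} \le \gamma^*_P + c\epsilon$, which is precisely why the hypothesis inflates Assumption \ref{asm:tau} to $\tau > 2\gamma^2_B(\gamma^*_P + c\epsilon)(N-1)$: this self-consistently keeps every $\Psi_t$ contractive despite using approximate $P$-matrices.

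Finally, I would combine via the triangle inequality $\lVert S_t - S^*_t \rVert_\square \le \rho_t^L D_0 + c' e_{t+1}$, which yields the backward recursion $e_t \le \bar\rho^L D_0 + (1 + c')\, e_{t+1}$ with $\bar\rho := \max_t \rho_t < 1$ and $e_T = 0$. Unrolling over the finite horizon gives $e_0 \le \bar\rho^L D_0 \,\frac{(1+c')^T - 1}{c'} = \bar\rho^L \cdot \Os\big((1+c')^T\big)$, so choosing $L = \Os(\log(1/\epsilon))$ forces $\bar\rho^L = \Os(\epsilon)$ and hence $\sum_{t} \lVert S_t - S^*_t\rVert_\square = e_0 = \Os(\epsilon)$, as claimed. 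The main obstacle is the second step: one must verify that the downstream errors propagate through the Riccati/Lyapunov recursions with a \emph{uniformly bounded} Lipschitz constant $c'$ (so that the $(1+c')^T$ factor stays finite and independent of $\epsilon$), and must close the mild circularity whereby the contraction modulus $\rho_t$ depends on the very $P$-matrices whose accuracy is being established -- a circularity resolved exactly by the $O(\epsilon)$-inflated lower bound on $\tau$.
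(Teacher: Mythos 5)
Your argument is correct in outline and rests on the same two pillars as the paper's proof -- the inner-loop contraction of Lemma \ref{lem:contract} and a backward induction showing that $\Os(\epsilon)$ errors in the downstream $P^i$-matrices propagate to $\Os(\epsilon)$ errors one step earlier -- but you carry the induction out explicitly, whereas the paper takes a different (and much terser) route: it first constructs an auxiliary \emph{deterministic} LQ game with control-weight $(\nicefrac{\tau}{2})I + R^i_t$ whose NE gains and coupled Riccati equations coincide with \eqref{eq:Nash_pol} and \eqref{eq:CARE}, and then invokes \cite[Theorem 4.5]{zaman2024independent} wholesale for the inductive claim $\lVert P^i_{t+1}-P^{i*}_{t+1}\rVert=\Os(\epsilon) \Rightarrow \lVert P^i_t-P^{i*}_t\rVert=\Os(\epsilon)$, treating the covariances $\Sigma^i_t$ afterward as continuous functions of $P^i_{t+1}$. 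Your version buys self-containedness and handles $(K^i_t,\Sigma^i_t)$ jointly through the $\lVert\cdot\rVert_\square$ metric, at the cost of having to verify yourself the two points you rightly flag: that the fixed-point perturbation constant $c'$ (coming from the Lyapunov/Riccati recursions and the uniformly invertible, diagonally dominant best-response map) is bounded over the bounded set of iterates, so that $(1+c')^T$ is a finite model-dependent constant for the fixed horizon $T$; and that the circular dependence of the contraction modulus on the approximate $P$-matrices is closed by the inflated condition $\tau > 2\gamma^2_B(\gamma^*_P + c\epsilon)(N-1)$. Both points are exactly what the citation to \cite{zaman2024independent} is doing in the paper, so your proposal is a legitimate, more elementary reconstruction rather than a gap.
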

\begin{proof}
    We prove this result by first defining an auxiliary game. Consider a game with dynamics given by \eqref{eq:dyn} but with cost function 
    \begin{align}
    & J^i (x,\pi^i,\pi^{-i}) = \EE \bigg[ \sum_{t=0}^{T-1} x^\top_t Q^i_t x_t \label{eq:ent_reg_cost_equiv}  \\
    & \hspace{1cm}  + (u^i_t)^\top \big((\nicefrac{\tau}{2}) I + R^i_t \big) u^i_t + x^\top_T Q^i_T x_T \bigg| x_0 = x \nonumber \bigg].
\end{align}
From results in \cite{zaman2024independent, bacsar1998dynamic}, we know that the NE controller for this game would be linear $u^{i*}_t = K^{i*}_t x$ and deterministic, where $K^{i*}_t$ is defined in \eqref{eq:Nash_pol} and the adjoining coupled Riccati equations are given by \eqref{eq:CARE}. Hence, this game can be thought of as the deterministic version of the ERGS game \eqref{eq:dyn}-\eqref{eq:ent_reg_cost}.

Having formulated the auxiliary game, which is a LQ game, we can utilize  \cite[Theorem 4.5]{zaman2024independent}, which relies on the \textit{good} convergence of the inner loop (lines 3-6 in Algorithm \ref{alg:RL_ER_Games}). This good convergence of the inner-loop is guaranteed under $\tau > 2 \gamma^2_B (\gamma^*_P + c \epsilon) (N-1)$ since \cite[Theorem 4.5]{zaman2024independent} inductively proves that if at time $t+1$, $\lVert P^i_{t+1} - P^{i*}_{t+1} \rVert = \Os(\epsilon)$ and the inner loop is contractive (which is true under  $\tau > 2 \gamma^2_B (\gamma^*_P + c \epsilon) (N-1)$), then $\lVert P^i_t - P^{i*}_t \rVert = \Os(\epsilon)$. Since $\lVert P^i_t - P^{i*}_t \rVert = \Os(\epsilon)$ and if $\epsilon >0$ is small enough, $\lVert \Sigma^i_t - \Sigma^{i*}_t \rVert = \Os(\epsilon)$ where $\Sigma^{i}_t = \big(I + 2(R^i_t + (B^i_t)^\top P^i_{t+1} B^i_t)/\tau \big)^{-1}$. The proof is thus complete. 
\end{proof}
Notice that the condition in Theorem \ref{thm:main_res} is slightly stronger than Assumption \ref{asm:tau} and ensures linear convergence of Algorithm \ref{alg:RL_ER_Games} to the NE of the ERGS. Next we propose an augmentation technique to achieve an $\epsilon$-NE of the ERGS if Assumption \ref{asm:tau} is not satisfied.

\subsection{$\delta$-augmented Entropy-Regularized Game} \label{subsec:delta_aug}
Now we investigate the case where the entropy regularization parameter $\tau$ does not satisfy Assumption \ref{asm:tau} leading to possible non-uniqueness and indeterminacy of NE. Let us define an augmented game with an augmented entropy regularized parameter $\utau := \tau + \delta$, dynamics \eqref{eq:dyn} and cost function
\begin{align}
    & \uJ^i (x,S^i,S^{-i}) := \EE \bigg[ \sum_{t=0}^{T-1} x^\top_t Q^i_t x_t + (u^i_t)^\top R^i_t u^i_t  \label{eq:ent_reg_cost_aug} \\
    & \hspace{2cm} + \utau \log \bigg( \frac{\pi^i_t(u^i_t | x_t)}{\mu^i_t(u^i_t | x_t)} \bigg) + x^\top_T Q^i_T x_T \bigg| x_0 = x \bigg]. \nonumber
\end{align}
where $\delta \in \RR^+$ is chosen such that $\utau > 2 \gamma^2_B \ugamma^*_P (N-1)$ where $\ugamma^*_P := \max_{\forall i,t} \lVert \uP^{i*}_t \rVert_F + c\epsilon$ ($c$ and $\epsilon$ similar to Theorem \ref{thm:main_res}) and $\uP^{i*}_t$ are solutions to the coupled Riccati equations of the augmented problem. Using results in the previous sections the NE of the augmented game exists, is unique and can be found using Algorithm \ref{alg:RL_ER_Games}. Furthermore, as shown below the NE of the augmented game will be $\Os(\delta)$-NE of the original game. We denote the unique NE of the augmented game by $(\uS^{i*}_t)_{\forall i,t} = (\uK^{i*}_t,\uSigma^{i*}_t)_{\forall i,t}$. 
\begin{theorem}
The NE $(\uS^{i*}_t)_{\forall i,t} = (\uK^{i*}_t,\uSigma^{i*}_t)_{\forall i,t}$ of the augmented game is $\Os(\delta)$, i.e.,
    \begin{align*}
        J^i(x,\uS^{i*},\uS^{-i*}) \! \leq \! J^i(x,S^i,\uS^{-i*}) \! + \! \Os(\delta), \forall i \in [N], S^i \in \Ss^i. 
    \end{align*}
\end{theorem}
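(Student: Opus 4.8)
The plan is to exploit the fact that the augmented and original games differ \emph{only} through the weight on the relative-entropy regularizer. Writing the expected cumulative relative entropy of agent $i$ as
\[
H^i(x,S^i,S^{-i}) := \EE\bigg[\sum_{t=0}^{T-1} \log\bigg(\frac{\pi^i_t(u^i_t|x_t)}{\mu^i_t(u^i_t|x_t)}\bigg)\bigg],
\]
the exact decomposition $\uJ^i = J^i + \delta\, H^i$ holds, since $\utau = \tau + \delta$. The quantity $H^i$ is always nonnegative, being an expected KL divergence, and for a linear Gaussian policy it admits the closed form obtained by summing \eqref{eq:neg_int} over $t$. First I would invoke the NE property of the augmented game, which is available because $\utau$ satisfies Assumption \ref{asm:tau} by construction, so Lemma \ref{Lemma:III-4} guarantees a unique NE. For any $S^i \in \Ss^i$, combining $\uJ^i(x,\uS^{i*},\uS^{-i*}) \le \uJ^i(x,S^i,\uS^{-i*})$ with the decomposition yields
\[
J^i(x,\uS^{i*},\uS^{-i*}) \le J^i(x,S^i,\uS^{-i*}) + \delta\big(H^i(x,S^i,\uS^{-i*}) - H^i(x,\uS^{i*},\uS^{-i*})\big).
\]
Since $H^i \ge 0$, the subtracted term is nonpositive, so the whole statement reduces to controlling $\delta\, H^i$ at the relevant deviation.

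Next I would reduce the claim to a single deviation. Because the $\epsilon$-NE inequality need only be checked against the best response, let $S^{i\dagger}$ denote the minimizer of $J^i(\cdot,\uS^{-i*})$ over $\Ss^i$; by Lemma \ref{lem:helper} and the construction in Theorem \ref{thm:NE}, this best response is itself a linear Gaussian policy with explicitly computable gain and covariance of the form \eqref{eq:best_response}. Any other deviation $S^i$ satisfies $J^i(x,S^i,\uS^{-i*}) \ge J^i(x,S^{i\dagger},\uS^{-i*})$, so it suffices to establish $J^i(x,\uS^{i*},\uS^{-i*}) \le J^i(x,S^{i\dagger},\uS^{-i*}) + \delta\, H^i(x,S^{i\dagger},\uS^{-i*})$ and then to bound the residual term.

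The main obstacle — and essentially the only quantitative step — is showing $H^i(x,S^{i\dagger},\uS^{-i*}) = \Os(1)$, uniformly in $\delta$. I would argue that the gains $\uK^{j*}_t$ and covariances $\uSigma^{j*}_t$ of the augmented NE, and consequently the best-response matrices $(K^{i\dagger}_t,\Sigma^{i\dagger}_t)$ generated from \eqref{eq:best_response}, remain uniformly bounded as $\delta$ ranges over the admissible interval (in particular as $\delta \to 0$): they are continuous functions of $\utau$ through the coupled Riccati and Lyapunov recursions, the covariances satisfy $\Sigma^{i\dagger}_t \preceq I$ with eigenvalues bounded away from zero whenever the bracketed matrix $R^i_t + (B^i_t)^\top P^i_{t+1} B^i_t$ stays bounded, so that the $\tr(\Sigma^{i\dagger}_t)$ and $-\log(|\Sigma^{i\dagger}_t|)$ contributions in \eqref{eq:neg_int} are finite. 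Over the finite horizon $T$ with linear dynamics and bounded noise covariance $\Sigma$, these bounded matrices produce a uniformly bounded state second moment $\EE[x_t x_t^\top]$, which controls the $(K^{i\dagger}_t x_t)^\top K^{i\dagger}_t x_t$ term in expectation. Substituting into the summed closed form then gives $H^i(x,S^{i\dagger},\uS^{-i*}) = \Os(1)$.

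Assembling the pieces, the displayed inequality becomes $J^i(x,\uS^{i*},\uS^{-i*}) \le J^i(x,S^{i\dagger},\uS^{-i*}) + \Os(\delta)$, and since $S^{i\dagger}$ attains the infimum over $\Ss^i$ this extends to $J^i(x,\uS^{i*},\uS^{-i*}) \le J^i(x,S^i,\uS^{-i*}) + \Os(\delta)$ for every $S^i \in \Ss^i$ and every $i \in [N]$, which is the desired conclusion. I expect the technical bookkeeping around the uniform-in-$\delta$ boundedness of the Riccati/Lyapunov solutions to be the only delicate part; everything else is an algebraic rearrangement of the two cost functionals.
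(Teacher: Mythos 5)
Your proof is correct, but it follows a genuinely different route from the paper's. The paper works entirely through the quadratic value-function representation: it shows $J^i_t = x^\top P^i_t x + q^i_t$ and $\uJ^i_t = x^\top \uP^i_t x + \uq^i_t$ under linear Gaussian policies, propagates the difference recursions to get $\lVert \uP^i_t - P^i_t \rVert_F = \Os(\delta)$ and $\lvert \uq^i_t - q^i_t \rvert = \Os(\delta)$, and then combines the two-sided bound $\lvert \uJ^i - J^i \rvert = \Os(\delta)$ with the NE property of the augmented game via a triangle-inequality chain. You instead exploit the exact, sign-aware identity $\uJ^i = J^i + \delta H^i$ with $H^i \geq 0$ (cumulative expected KL), which lets you drop the $-\delta H^i(x,\uS^{i*},\uS^{-i*})$ term for free and reduces all the bookkeeping to a single quantitative claim: $H^i(x,S^{i\dagger},\uS^{-i*}) = \Os(1)$ at the best-response deviation. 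Two remarks on the comparison. First, your explicit reduction to the best response $S^{i\dagger}$ is not merely a convenience --- it is what makes the bound uniform over all $S^i \in \Ss^i$. The paper's difference recursion for $\tilde\uP^i_t - \tP^i_t$ carries a term $(\delta/2)(K^i_t)^\top K^i_t$, and $\lvert \tilde\uq^i_t - \tq^i_t\rvert$ carries $(\delta/2)\lvert \tr(\Sigma^i_t) - p - \log|\Sigma^i_t| \rvert$, so its $\Os(\delta)$ constants implicitly depend on the deviation's gain and covariance and are not uniform over unbounded deviations; your argument sidesteps this because $J^i(x,S^i,\uS^{-i*}) \geq J^i(x,S^{i\dagger},\uS^{-i*})$ transfers the bound from the (provably bounded) best response to arbitrary $S^i$. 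Second, the price you pay is having to justify existence, linear-Gaussian form, and $\delta$-uniform boundedness of $S^{i\dagger}$ and of the augmented NE parameters; this is the same ingredient the paper needs (boundedness of the iterates under the contractivity condition on $\utau$), so neither approach escapes it, but you should state explicitly that the minimum of $J^i(\cdot,\uS^{-i*})$ over admissible policies is attained in $\Ss^i$ by the dynamic-programming argument of Theorem \ref{thm:NE} applied to a single agent with the others frozen at $\uS^{-i*}$, and that $\Sigma^{i\dagger}_t$ has eigenvalues in a fixed interval $(\underline{\sigma},1]$ so that $-\log|\Sigma^{i\dagger}_t|$ is bounded. With those details filled in, your argument is complete and arguably tighter than the one in the paper.
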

\begin{proof}
Recalling the value $J^i_t$ with entropy regularization parameter $\tau$ in \eqref{eq:ent_reg_cost}, we first prove that under the set of linear Gaussian controllers $(S^i_t)_{\forall i,t} = (K^i_t,\Sigma^i_t)_{\forall i,t}$ the cost function in \eqref{eq:ent_reg_cost} has a quadratic structure, i.e., 
        $J^i_t(x,S^i,S^{-i}) = x^\top P^i_t x + q^i_t$. 
    The base case is true since $J^i_T(x,S^i,S^{-i}) = x^\top Q^i_T x$. Let us assume $J^i_{t+1}(x,S^i,S^{-i}) = x^\top P^i_{t+1} x + q^i_{t+1}$. Then, unrolling $J^i_t$ by one step, we get
    \begin{align}
        & J^i_t(x,S^i,S^{-i}) = x^\top Q^i_t x + \EE \Big[(u^i_t)^\top R^i_t u^i_t + \tau \log \frac{\pi^i_t(u^i_t|x)}{\pi^i_t(u^i_t|x)} \nonumber \\
        & (A_t x + \sum_{j=1}^N B^j_t u^j_t + \omega_t)^\top\! P^i_{t+1} (A_t x + \sum_{j=1}^N B^j_t u^j_t + \omega_t) \Big] \hspace{-0.1cm} +\! q^i_{t+1} \nonumber \\
        & = x^\top Q^i_t x + (K^i_t x)^\top R^i_t (K^i_t x) + \tr(\Sigma^i_t R^i_t) + q^i_{t+1} \nonumber \\
        & + \frac{\tau}{2} \big(\tr(\Sigma^i_t) + (K^i_t x)^\top (K^i_t x) - p - \log(|\Sigma^i_t|)\big) \nonumber  \\
        & + x^\top \Big( A_t + \sum_{j=1}^N B^j_t K^j_t \Big)^\top P^i_{t+1} \Big( A_t + \sum_{j=1}^N B^j_t K^j_t \Big) x \nonumber \\
        & + \tr\bigg( (\Sigma + \sum_{j=1}^N \Sigma^j_t)P^i_{t+1} \bigg) = x^\top P^i_t x + q^i_t, \label{eq:value_ent_reg}
    \end{align}
    where matrix $P^i_t$ and scalar $q^i_t$ can be written recursively as
    \begin{align}
        P^i_t & = Q^i_t + (K^i_t)^\top ((\nicefrac{\tau}{2}) I +  R^i_t) K^i_t \nonumber \\
        & + \Big( A_t + \sum_{j=1}^N B^j_t K^j_t \Big)^\top\!\! P^i_{t+1} \Big( A_t + \sum_{j=1}^N B^j_t K^j_t \Big), \hspace{0.1cm} P^i_T = Q^i_T \nonumber \\
        q^i_t & = q^i_{t+1} + \tr(\Sigma^i_t ((\nicefrac{\tau}{2}) I + R^i_t)) - \frac{\tau}{2} \big(p + \log(|\Sigma^i_t|)\big) \nonumber \\
        & + \tr\bigg( (\Sigma + \sum_{j=1}^N \Sigma^j_t)P^i_{t+1} \bigg), \hspace{0.1cm} q^i_T = 0. \label{eq:P_N}
    \end{align}
    Hence, we have proven the quadratic form of $J^i_t$. The cost function for the $\delta$--augmented game is given by  
    \begin{align*}
        & \uJ^i_{t} (x,S^i,S^{-i}) = \EE \Big[ \sum_{s=t}^{T-1} x^\top_s\big( Q^i_s \\
        & \hspace{1.2cm} + (K^i_s)^\top ((\nicefrac{\utau}{2}) I + R^i_s) K^i_s \big) x_s + x^\top_T Q^i_T x_T\Big| x_t = x \Big],
    \end{align*}
    for all $i \in [N]$. The value function for this game has a structure similar to \eqref{eq:value_ent_reg},
    \begin{align*}
        \uJ^i_{t}(x_t,K^i,K^{-i}) = x^\top_t \uP^i_t x_t + \uq^i_t,
    \end{align*}
    where $\uP^i_t$ is defined recursively as
    \begin{align}
        \uP^i_t & =  Q^i_t + (K^i_t)^\top (\nicefrac{\utau}{2}I + R^i_t) K^i_t \nonumber\\
        & + \Big(A_t + \sum_{j=1}^N B^j_t K^j_t \Big)^\top \!\!\uP^i_{t+1} \Big(A_t + \sum_{j=1}^N B^j_t K^j_t \Big), \nonumber \end{align}
        and $\uP^i_T = Q^i_T$. Similarly, $\uq^i_t$ is defined recursively by
        \begin{align}
        \uq^i_t  =& \uq^i_{t+1} + \tr(\Sigma^i_t (\nicefrac{\utau}{2} I + R^i_t)) - \frac{\utau}{2} \big(p + \log(|\Sigma^i_t|)\big) \nonumber \\
        & + \tr\bigg( (\Sigma + \sum_{j=1}^N \Sigma^j_t)\uP^i_{t+1} \bigg), \uq^i_T = 0. \label{eq:uP_uN}
    \end{align}
    Next, we find the difference between the two costs as follows:
    \begin{align}
        & \lvert \uJ^i_{t}(x,S^i,S^{-i}) - J^i_{t}(x,S^i,S^{-i}) \rvert \label{eq:uJ-J} \\
        & \hspace{3cm} = | x^\top \big( \uP^i_t - P^i_t \big) x  + \uq^i_t - q^i_t | \nonumber 
    \end{align}
    Thus, to quantify the difference $\uJ^i_{t}(K^i,K^{-i}) - J^i_{t}(K^i,K^{-i})$, we need to upper bound 
    $\lVert \uP^i_t - P^i_t \rVert_F$ and $\lvert \uq^i_t - q^i_t \rvert$. Thus, using \eqref{eq:uP_uN} and \eqref{eq:P_N}, we can deduce that 
    \begin{align}
        \lvert \uq^i_t - q^i_t \rvert = \Os \Big(\max_{t < s \leq T} \lVert \uP^i_s - P^i_s \rVert_F \Big). \label{eq:uN-N}
    \end{align}
    In the interest of conciseness, we omit these computations. Now, we bound the quantity $\lVert \uP^i_t - P^i_t \rVert_F$ as follows:
    \begin{align}
        & \lVert \uP^i_t - P^i_t \rVert_F = \Big\lVert  \delta/2(K^i_t)^\top K^i_t + \nonumber \\
        & \hspace{0.4cm} \Big( A_t + \sum_{j=1}^N B^j_t K^j_t \Big)^\top (\uP^i_{t+1} - P^i_{t+1}) \Big( A_t + \sum_{j=1}^N B^j_t K^j_t \Big) \Big\rVert_F \nonumber \\
        & \leq \delta/2 \lVert K^i_t \rVert^2_F + \Big(\gamma_A + \sum_{j=1}^N \gamma_B \lVert K^j_t \rVert_F \Big)^2 \lVert \uP^i_{t+1} - P^i_{t+1} \rVert_F \nonumber \\
        & \leq \delta \underbrace{(D^2/2)}_{c_1} + \underbrace{(\gamma_A + N \gamma_B D )^2}_{c_2} \lVert \uP^i_{t+1} - P^i_{t+1} \rVert_F, \label{eq:uP-P}
    \end{align}
    where $\gamma_A = \max_{\forall t} \lVert A_t \rVert_F$. The last inequality utilizes the fact that since $\utau > 2 \gamma^2_B \ugamma^*_P (N-1)$, $\Psi_t$ operator for this game will be contractive, and hence the norm of iterates $K^i_t$ will be bounded. Using \eqref{eq:uP-P}, we can recursively say that
    \begin{align*}
        \lVert \uP^i_t - P^i_t \rVert_F & \leq c_1 \delta/2 \sum_{s=t}^{T-1} c^{s-t}_2  + c^{T-t}_2 \lVert \uP^i_T - P^i_T \rVert_F = \Os(\delta) 
    \end{align*}
    Similarly, we have
    \begin{align}
        \lvert \uq^i_t - q^i_t \rvert = \Os(\delta). \label{eq:uN-N_Os}
    \end{align}
    Thus, by using \eqref{eq:uJ-J} and \eqref{eq:uN-N} with $t=0$, we obtain 
    \begin{align}
        \lvert \uJ^i(x,S^i,S^{-i} ) - J^i(x,S^i,S^{-i}) \rvert = \Os(\delta). \label{eq:cost_diff_aug}
    \end{align}
    Further, using \eqref{eq:cost_diff_aug} and denoting $(\uS^{i*}_t)_{\forall i,t} = (\uK^{i*}_t,\uSigma^{i*}_t)_{\forall i,t}$ as the unique NE of the $\delta$--augmented game with entropy regularization parameter $\utau$ (which is bound to exist due to the condition on $\utau$), for any $S^i = (K^i,\Sigma^i) \in \Ss^i$, we have
    \begin{align}
        & J^i  (x,\uS^{i*},\uS^{-i*}) - \uJ^i  (x,S^i, \uS^{-i*} ) \nonumber \\
        & \hspace{0.4cm} \leq J^i  (x,\uS^{i*},\uS^{-i*}) - \uJ^i  (x,\uS^{i*}, \uS^{-i*} ) = \Os(\delta). \label{eq:eps_NE_aug_inter}
    \end{align}
    Next, let us fix $i\! \!\in\!\![N]$ and define a set of controllers $(S^i,\uS^{-i*})$ where $S^i \in \Ss^i$. We know from our earlier analysis that $\uJ^i(x,S^i,\uS^{-i*}) = x^\top \tilde\uP^i_t x + \tilde \uq^i_t$ and $J^i(x,S^i,\uS^{-i*}) = x^\top \tP^i_t x + \tq^i_t$, where $\tP^i_t,$ $ \tilde{\uP}^i_t,$ $ \tq^i_t$, and $\tilde{\uq}^i_t$ are defined recursively as
    \begin{align*}
        & \tilde\uP^i_t = Q^i_t + (K^i_t)^\top ((\nicefrac{\utau}{2})I + R^i_t) K^i_t + \nonumber\\
        & (A_t \!+\! B^i_t K^i_t + \!\sum_{j \neq i}^N B^j_t \uK^{j*}_t)^\top \tilde\uP^i_{t+1} (A_t \!+\! B^i_t K^i_t + \!\sum_{j \neq i}^N B^j_t \uK^{j*}_t),  \nonumber \\
        & \tilde\uq^i_t = \tilde\uq^i_{t+1} + \tr(\Sigma^i_t ((\nicefrac{\utau}{2}) I + R^i_t)) - \frac{\utau}{2} \big(p + \log(|\Sigma^i_t|)\big) \nonumber \\
        & + \tr\bigg( (\Sigma + \Sigma^i_t + \sum_{j \neq i} \uSigma^{j*}_t)\tilde\uP^i_{t+1} \bigg), \hspace{0.1cm} \tilde\uq^i_T = 0, \\
        & \tilde P^i_t = Q^i_t + (K^i_t)^\top ((\nicefrac{\tau}{2}) I +  R^i_t) K^i_t + \\
        & (A_t \!+\! B^i_t K^i_t + \!\sum_{j \neq i}^N B^j_t \uK^{j*}_t)^\top \tilde P^i_{t+1} (A_t \!+\! B^i_t K^i_t +\! \sum_{j \neq i}^N B^j_t \uK^{j*}_t), \nonumber \end{align*}\begin{align*}
        & \tilde q^i_t = \tilde q^i_{t+1} + \tr(\Sigma^i_t ((\nicefrac{\tau}{2}) I + R^i_t)) - \frac{\tau}{2} \big(p + \log(|\Sigma^i_t|)\big) \nonumber \\
        & + \tr\bigg( (\Sigma + \Sigma^i_t + \sum_{j \neq i} \uSigma^{j*}_t)\tP^i_{t+1} \bigg), \hspace{0.1cm} \tq^i_T = 0.\\[-2.2em]
    \end{align*}
    where $\tilde\uP^i_T = \tilde P^i_T= Q^i_T$. Using these expressions, we can deduce that
    \begin{align}
        &J^i (x,S^i,\uS^{-i*}) \nonumber \\
        & = \uJ^i (x,S^i,\uS^{-i*} ) - x^\top (\tilde\uP^i_0 - \tP^i_0 )x - \tilde\uq^i_0 + \tq^i_0 \nonumber \\
        & \geq \uJ^i (x,\uS^{i*},\uS^{-i*} ) - x^\top (\tilde\uP^i_0 - \tP^i_0 )x - \tilde\uq^i_0 + \tq^i_0  \label{eq:eps_NE_aug_inter_2}
    \end{align}
    Using analysis similar to \eqref{eq:uP_uN}-\eqref{eq:uN-N_Os}, we can further conclude that $\lVert \tilde\uP^i_0 - \tP^i_0 \rVert_F = \Os(\delta)$ and $\lvert \tilde\uq^i_0 - \tq^i_0 \rvert = \Os(\delta)$. Hence, $ | \uJ^i(x,S^i,\uS^{-i*}) - J^i(x,S^i,\uS^{-i*}) | = \Os(\delta)$. Finally, using \eqref{eq:eps_NE_aug_inter}-\eqref{eq:eps_NE_aug_inter_2}, we can write
    \begin{align*}
        & | J^i  (x,\uS^{i*},\uS^{-i*}) - J^i (x,S^i,\uS^{-i*}) | \\
        & = | J^i  (x,\uS^{i*},\uS^{-i*}) - \uJ^i  (x,\uS^i, \uS^{-i*} ) \\
        & \hspace{0.5cm} + \uJ^i (x,\uS^{i*},\uS^{-i*} ) - J^i (x,S^i,\uS^{-i*}) | \\
        & \leq | J^i  (x,\uS^{i*},\uS^{-i*}) - \uJ^i  (x,\uS^i, \uS^{-i*} ) | + | J^i (x,S^i,\uS^{-i*}) \\
        & \hspace{0.5cm}  - J^i (x,S^i,\uS^{-i*}) | + | x^\top (\tilde\uP^i_0 - \tP^i_0 ) x | + | \tilde\uq^i_0 - \tq^i_0 | \\
        & = \big| J^i  (x,\uS^{i*},\uS^{-i*}) - \uJ^i  (x,\uS^i, \uS^{-i*} ) \big| \\
        & \hspace{0.5cm} + \big| x^\top (\tilde\uP^i_0 - \tP^i_0 )x \big| + \big| \tilde\uq^i_0 - \tq^i_0 \big| = \Os(\delta)
    \end{align*}
    which completes the proof.
\end{proof}

\section{Conclusion \& Future Work}
In this work, we have formulated the relative entropy-regularized linear-quadratic general-sum $N$-agent games and proved that their Nash Equilibria (NE) lie within the domain of linear Gaussian policies. Furthermore, we have delineated sufficient conditions for the uniqueness of the same. Subsequently, we have presented a Policy Optimization (PO) algorithm capable of verifiably attaining the NE at a linear convergence rate under suitable conditions on the regularization parameter. Finally, we have introduced a $\delta$-augmentation technique to obtain an $\epsilon$-NE of the game in instances where the regularization proves inadequate. Future directions include extending the PO algorithm to a data-driven 
RL framework that enables each agent to autonomously learn the NE. Additionally, we intend to explore the ramifications of alternative regularization techniques, such as cross-entropy regularization and Hellinger-Bhattacharya Regularization, on both the NE and the efficacy of PO algorithms in achieving convergence.

\bibliography{references}
\bibliographystyle{IEEEtran} 
\end{document}